\makeatletter \@addtoreset{equation}{section}
\newtheorem{thm}{Theorem}[section]
\newtheorem{cor}{Corollary}[section]
\newtheorem{lem}{Lemma}[section]
\theoremstyle{definition}
\newtheorem{rem}{Remark}[section]
\begin{document}

\title{\Large {\color{black}The smallest} eigenvalue of large Hankel matrices generated by a singularly perturbed Laguerre weight}

\author[1]{{\large Mengkun Zhu}\footnote{Zhu\_mengkun@163.com}}
\author[2]{{\large Yang Chen}\footnote{yayangchen@um.edu.mo}}
\author[3]{{\color{black}{\large Chuanzhong Li}\footnote{\color{black}Corresponding author:lichuanzhong@nbu.edu.cn}}}
\affil[1]{\large{School of Mathematics and Statistics}, Qilu University of Technology (Shandong Academy of Sciences)\\
Jinan 250353, China}
\affil[2]{\large Department of Mathematics, University of Macau,
Avenida da Universidade, Taipa, Macau, China}
\affil[3]{\color{black}\large School of Mathematics and Statistics, Ningbo University,
Ningbo 315211, China}

\renewcommand\Authands{ and }

\date{}
\maketitle

\begin{abstract}
An asymptotic expression of the orthonormal polynomials $\mathcal{P}_{N}(z)$ as $N\rightarrow\infty$, associated with the singularly perturbed Laguerre weight $w_{\alpha}(x;t)=x^{\alpha}{\rm e}^{-x-\frac{t}{x}},~x\in[0,\infty),~\alpha>-1,~t\geq0$ is derived. Based on this, we establish the asymptotic behavior of the smallest eigenvalue, $\lambda_{N}$, of the Hankel matrix generated by the weight $w_{\alpha}(x;t)$.
\end{abstract}

Keywords: Hankel matrix, Orthogonal polynomials, Bessel function,
 Small eigenvalue, Asymptotics

\section{Introduction}
\noindent The analysis of Hankel matrices, occurs naturally in moment problems, which plays an important role in random matrix theory. The smallest eigenvalue is important since it gives great useful information of the nature of Hankel matrices with respect to given weights.

The moment sequence generated by a weight function $w(x)$ is given by
\begin{equation}\label{b2}
\mu_{j}:=\int x^{j}w(x)dx,~~j=0,1,2,\ldots,
\end{equation}
it is known that the associated Hankel matrices,
\begin{equation}\label{b1}
\mathcal{H}_{N}:=\left(\mu_{j+k}\right)_{j,k=0}^{N},~~ N=0,1,2,\ldots
\end{equation}
are positive definite.

Let $\lambda_{N}$ denote the smallest eigenvalue of $\mathcal{H}_{N}$. The asymptotic behavior of $\lambda_{N}$ for large $N$ has been widely investigated in recent decades. Szeg\"{o} [\cite{C3}] studied the asymptotic behavior of $\lambda_{N}$ for the Hermite weight ($w(x)={\rm e}^{-x^{2}},x\in(-\infty,\infty)$) and the Laguerre weight ($w(x)={\rm e}^{-x},x\geq0$). He found\footnote[1]{ In all of this paper, $a_{N}\sim b_{N}$ means $\lim_{N\rightarrow\infty}a_{N}/b_{N}$=1.}
\begin{equation*}
\lambda_{N}\sim A N^{\frac{1}{4}}B^{\sqrt{N}},
\end{equation*}
where $A,B$ are certain constants with satisfying $0<A,~0<B<1$.

Widom and Wilf [\cite{C6}] investigated the case of $w(x)$ on a compact interval $[a,b]$ with the Szeg\"{o} condition holds, they obtained
\begin{equation*}
\lambda_{N}\sim A \sqrt{N}B^{N}.
\end{equation*}

Chen and Lawrence [\cite{C9}] found the asymptotic behavior of $\lambda_{N}$ with the weight function $w(x)={\rm e}^{-x^{\beta}},~x\in[0,\infty),~\beta>\frac{1}{2}$. Berg, Chen and Ismail [\cite{C13}] proved that the moment sequence (\ref{b2}) is determinate iff $\lambda_{N}\rightarrow0$ as $N\rightarrow\infty$ which is a new criteria for the determinacy of the Hamburger moment problem. In [\cite{C10}], Chen and Lubinsky deduced out the behavior of $\lambda_{N}$ when $w(x)={\rm e}^{-|x|^{\alpha}},~x\in(-\infty, \infty),~\alpha>1$. Berg and Szwarc [\cite{C14}] proved that $\lambda_{N}$ has exponential decay to zero for any measure which with compact support.

In recent several years, Zhu, etc. studied the {\color{black}Jacobi} case [\cite{C22}], i.e. $w(x)=x^{\alpha}(1-x)^{\beta},~x\in[0,1],~\alpha>-1,~\beta>-1$; the generalized Laguerre weight [\cite{MMAS}] $w(x)=x^{\alpha}{\rm e}^{-x^{\beta}},~x\in[0,\infty),~\alpha>-1,~\beta>\frac{1}{2}$; as well as the weight function $w(x)={\rm e}^{-x^{\beta}},~x\in[0,\infty)$ at the critical point $\beta=\frac{1}{2}$, see [\cite{AMC2}].

In this paper, we choose the singularly perturbed Laguerre weight $w_{\alpha}(x;t)=x^{\alpha}{\rm e}^{-x-\frac{t}{x}},~x\in[0,\infty),~\alpha>-1,~t\geq0$, which was motivated in part by an integrable quantum field theory at finite temperature. It transpires that this is equivalent to the characterization of a sequence of polynomials orthogonal with respect to the weight $w_{\alpha}(x;t)$. Some earlier researches on this weight, e.g., the weight $w_{\alpha}(x;t)$ and Painlev\'{e} \textrm{III}, please see [\cite{C-C,chenits}].

In the next section we analysis the properties of Hankel matrices generated by the above weight by using the modified Bessel function. We also reproduce some known results that will be applied to find the estimation of $\lambda_{N}$. In section 3, by adopting a previous result [\cite{C20}], we obtain the asymptotic formula for the polynomials orthonormal with respect to $w_{\alpha}(x;t)=x^{\alpha}{\rm e}^{-x-\frac{t}{x}},~x\in[0,\infty),~\alpha>-1,~t\geq0$, which is then employed in sections 4 for the determination of the large $N$ behavior of $\lambda_{N}$.

\section{Preliminaries}
Consider the singularly perturbed Laguerre weight
\begin{equation*}
w(x):=x^{\alpha}{\rm e}^{-x-\frac{t}{x}},~x\in[0,\infty),~\alpha>-1,~t\geq0.
\end{equation*}
The Hankel matrix, generated by the rapidly vanishing weight mentioned above is given by
\begin{equation*}
\left(\mu_{j+k}(t)\right)=\left(\int_{0}^{\infty}x^{j+k}x^{\alpha}{\rm e}^{-x-\frac{t}{x}}\right)_{0\leq j,k\leq N},~~\alpha>-1,~~t\geq0.
\end{equation*}

If $t=0$, the weight function, i.e., the factor to the right of $x^{j+k}$ is the Laguerre or the gamma density. The matrix elements is actually the integral representation of the modified Bessel function of the second kind {\color{black}[\cite{new2}, p. 917]}, since
\begin{equation*}
\int_{0}^{\infty}x^{k}x^{\alpha}{\rm e}^{-x-\frac{t}{x}}dx=2t^{\frac{\alpha+k+1}{2}}\mathrm{K}_{\alpha+k+1}(2\sqrt{t}),~~t>0.
\end{equation*}
We could of course take a special value of $t$.

For a fixed $t(>0)$, and $\nu\rightarrow\infty$ through positive values,
\begin{equation*}
\mathrm{K}_{\nu}(t)\sim\frac{1}{2}\frac{\Gamma(\nu)}{(t/2)^{\nu}},
\end{equation*}
so the moments, for large and positive $\alpha+k+1$ becomes,
\begin{equation*}
\Gamma(\alpha+k+1).
\end{equation*}

{\color{black}From the Laplace's method, it is evident that the main terms in the asymptotic expansion of
\begin{equation*}
\int_{0}^{\infty}x^{k}x^{\alpha}{\rm e}^{-x-\frac{t}{x}}dx
\end{equation*}
will not involve $t$. We write the integral as $\int_{0}^{\infty}{\rm e}^{-v(x)}dx$ and with the exponent
\begin{equation*}
v(x)=x+\frac{t}{x}-(\alpha+k)\log x
\end{equation*}
we get
\begin{equation*}
v'(x)=1-\frac{t}{x^2}-\frac{\alpha+k}{x}
\end{equation*}
and
\begin{equation*}
v''(x)=\frac{2t}{x^3}+\frac{\alpha+k}{x^2}.
\end{equation*}
The equation $v'(x_0)=0$ has solution
\begin{equation*}
x_0=\frac{\alpha+k+\sqrt{(\alpha+k)^2-4t}}{2}=\alpha+k+O\left(\frac{t}{k}\right),~~k\rightarrow\infty,
\end{equation*}
so the leading terms in $v(x_0)$ and $v''(x_0)$ do not depend upon $t$.
}We see this problem is similar to the previous problem studied by Emmart, Chen and Weems [\cite{C11}].

In particular, with $\nu=n+\frac{1}{2},~n=0,1,\ldots$, $\mathrm{K}_{\nu}$ arises in quantum chemistry, for such $\nu$, i.e., for $\alpha=n-\frac{1}{2},~n\geq0$, it is known as the spherical Bessel functions. The relevant paper on $\mathrm{K}_{\nu}(t)$ for $t$ fixed and large $\nu$ can be found from [\cite{Sidi}].
The focus of this paper is to derive the asymptotic behavior of the smallest eigenvalue $\lambda_{N}$ of $\mathcal{H}_{N}$.

It is well known that the smallest eigenvalue $\lambda_{N}$ can be found using the classical Rayleigh quotient
\begin{equation}\label{e1}
\lambda_{N}=\min\left\{\frac{\sum_{j,k=0}^{N}\overline{x}_{j}\mu_{j+k}x_{k}}{\sum_{k=0}^{N}|x_{k}|^{2}}~\Bigg{|}~X:=\left(x_{0},x_{1},\ldots,x_{N}\right)^{T}\in\mathbb{C}^{N+1}
\setminus\{0\}\right\}.
\end{equation}
Let $P_{N}$ be the orthogonal polynomials associated with $w(x)$, and denote by
\begin{equation*}
P_{N}(z):=\sum_{k=0}^{N}x_{k}z^{k},
\end{equation*}
then
\begin{equation}\label{0.00}
\int_{0}^{\infty}|P_{N}(x)|^{2}w(x)dx=\sum_{j,k=0}^{N}\overline{x}_{j}\mu_{j+k}x_{k},
\end{equation}
Consequently, in the condition of
\begin{equation}\label{200}
\int_{0}^{\infty}|P_{N}(x)|^{2}w(x)dx=1,
\end{equation}
the expression for $\lambda_{N}$, (\ref{e1}), can be recast as {\color{black}[\cite{C3}, p.453]}
\begin{equation}\label{e2}
\lambda_{N}=\min\left\{\frac{2\pi}{\int_{-\pi}^{\pi} \left|P_{N}({\rm e}^{{\rm i}{\color{black}\phi}})\right|^{2}d{\color{black}\phi}} \right\}.
\end{equation}

We can also obtain the polynomials denoted by $\mathcal{P}_{N}(z)$, orthonormal with respect to $w(x)$, through
\begin{equation*}
P_{N}(z)=\sqrt{h_{N}}\mathcal{P}_{N}(z),
\end{equation*}
where $h_{N}$ is the square of the $L^{2}$ norm of $P_{N}(z)$. If we define
\begin{equation*}
P_{N}(z):=\sum_{k=0}^{N}\xi_{k}\mathcal{P}_{k}(z),~~~{\rm and}~~~\mathcal{K}_{jk}:=\int_{-\pi}^{\pi}\mathcal{P}_{j}\left({\rm e}^{{\rm i}{\color{black}\phi}}\right)\mathcal{P}_{k}\left({\rm e}^{-{\rm i}{\color{black}\phi}}\right)d{\color{black}\phi},
\end{equation*}
then we can see that
\begin{equation*}
\int_{-\pi}^{\pi}\left|P_{N}\left({\rm e}^{{\rm i}{\color{black}\phi}}\right)\right|^{2}d{\color{black}\phi}=\sum_{j,k=0}^{N}\overline{\xi}_{j}\mathcal{K}_{jk}\xi_{k},
\end{equation*}

Again with the condition given by (\ref{200}), i.e.
\begin{equation*}
\sum_{k=0}^{N}\left|\xi_{k}\right|^{2}=1,
\end{equation*}
 the formula (\ref{e2}) will be equivalent to
\begin{equation}\label{e3}
\lambda_{N}=\min\left\{\frac{2\pi}{\sum_{j,k=0}^{N}\overline{\xi}_{j}\mathcal{K}_{jk}\xi_{k}}  \right\}.
\end{equation}
Based on the Cauchy's and Schwarz inequality, we will find that
\begin{equation*}
\begin{split}
\sum_{j,k=0}^{N}\overline{\xi}_{j}\mathcal{K}_{jk}\xi_{n}\leq\sum_{j,k=0}^{N}\mathcal{K}_{jj}^{\frac{1}{2}}\mathcal{K}_{kk}^{\frac{1}{2}}\left|\xi_{j}\right|\left|
\xi_{k}\right|\leq\sum_{j=0}^{N}\mathcal{K}_{jj}\cdot\sum_{k=0}^{N}\left|\xi_{k}\right|^{2}
\leq\sum_{k=0}^{N}\mathcal{K}_{kk}.
\end{split}
\end{equation*}
Therefore, a lower bound for the smallest eigenvalue of $\lambda_{N}$ is given by
\begin{equation}\label{e4}
\lambda_{N}\geq\frac{2\pi}{\sum_{k=0}^{N}\mathcal{K}_{kk}}.
\end{equation}

\section{The orthonomal polynomials with respect to the singularly perturbed Laguerre weight.}
\subsection{The preliminary expression of orthonomal polynomials}
{\color{black}We first give a brief description of Coulomb fluid [\cite{C20}]. The energy of a system of $N$ logarithmically repelling particles on the line confined by an external potential $v(x)$ reads
\begin{equation*}
E\left(x_{1},x_{2},\ldots,x_{N}\right)=-2\sum_{1\leq j<k\leq N}\log\left|x_{j}-x_{k}\right|+\sum_{j=1}^{N}v\left(x_{j}\right),
\end{equation*}
The collection particles, for large enough $N$, can be approximated as a continuous fluid with a certain density $\sigma(x)$ supported on a single interval $[a,b]$ [\cite{Dyson}]. This density, $\sigma(x)$, corresponding to the equilibrium density of the fluid, is obtained by the constrained minimization of the free-energy function, $F[\sigma]$,
\begin{equation*}
F[\sigma]:=\int_{a}^{b}\sigma(x)v(x)dx-\int_{a}^{b}\int_{a}^{b}\sigma(x)\log|x-y|\sigma(y)dxdy.
\end{equation*}
subject to
\begin{equation*}
\int_{a}^{b}\sigma(x)dx=N,~~~\sigma(x)>0.
\end{equation*}

Upon minimization [\cite{Tsuji}], the equilibrium density $\sigma(x)$ is found to satisfy the integral equation,
\begin{equation*}
L:=v(x)-2\int_{a}^{b}\log|x-y|\sigma(y)dy,~~~~x\in[a,b],
\end{equation*}
where $L$ is the Lagrange multiplier. The derivative of this equation over $x$ gives rise to a singular integral equation
\begin{equation*}
v'(x)-2P\int_{a}^{b}\frac{\sigma(y)}{x-y}dy=0,~~~x\in[a,b].
\end{equation*}
where $P$ denotes the Cauchy principal value. Based on the theory of singular integral equations [\cite{Mikhlin}], we find
\begin{equation*}
\sigma(x)=\frac{1}{2\pi^{2}}\sqrt{\frac{b-x}{x-a}}P\int_{a}^{b}\frac{v'(y)}{y-x}\sqrt{\frac{y-a}{b-y}}dy.
\end{equation*}
Thus the normalization, $\int_{a}^{b}\sigma(x)dx=N$, becomes
\begin{equation*}
\frac{1}{2\pi}\int_{a}^{b}\sqrt{\frac{y-a}{b-y}}v'(y)dy=N.
\end{equation*}
with a supplementary condition [\cite{C20}],
\begin{equation*}
\int_{a}^{b}\frac{v'(x)}{\sqrt{(b-x)(x-a)}}=0.
\end{equation*}
Consequently, the normalization condition becomes
\begin{equation*}
\int_{a}^{b}\frac{xv'(x)}{\sqrt{(b-x)(x-a)}}=2\pi N.
\end{equation*}
For the problem at hand, the potential, $v(x)=-\alpha\log x+x+\frac{t}{x}$ is convex. Hence, applying the results of [\cite{C20}, (4.2)-(4.8)] directly, i.e.} the monic orthogonal polynomials $P_{N}(z)$ associated with $w(x)={\rm e}^{-v(x)}$, can be approximated by
\begin{equation}\label{pn}
P_{N}(z)\sim\exp\left[-S_{1}(z)-S_{2}(z)\right],
\end{equation}
where
\begin{equation*}
S_{1}(z)=\frac{1}{4}\ln\left[\frac{16(z-a)(z-b)}{(b-a)^{2}}\left(\frac{\sqrt{z-a}-\sqrt{z-b}}{\sqrt{z-a}+\sqrt{z-b}}\right)^{2}\right],~\text{$z\notin [a,b]$},
\end{equation*}

\begin{equation*}
\begin{split}
S_{2}(z)  &=-N\ln\left(\frac{\sqrt{z-a}+\sqrt{z-b}}{2}\right)^{2}\\
         &  +\frac{1}{2\pi}\int_{a}^{b}\frac{v(x)}{\sqrt{(b-x)(x-a)}}\left[\frac{\sqrt{(z-a)(z-b)}}{x-z}+1\right]dx,~\text{$z\notin [a,b]$}.
\end{split}
\end{equation*}
Chen and his co-authors also gave an equivalent representation for $S_{1}$:
\begin{equation*}
{\rm e}^{-S_{1}(z)}=\frac{1}{2}\left[\left(\frac{z-b}{z-a}\right)^{\frac{1}{4}}+\left(\frac{z-a}{z-b}\right)^{\frac{1}{4}}\right],~\text{$z\notin [a,b]$}.
\end{equation*}

\begin{lem}\label{lem1}
{\rm [\cite{C20}]} The orthonomal polynomials $\mathcal{P}_{N}(z)$ with respect to the weight $w(x)$, i.e.
\begin{equation*}
\int_{a}^{b}\left[\mathcal{P}_{N}(x)\right]^{2}w(x)dx=1,
\end{equation*}
can be given by:
\begin{equation*}
\mathcal{P}_{N}(z)=\sqrt{\frac{2}{\pi(b-a)}}\exp\left[\frac{A}{2}\right]P_{N}(z),
\end{equation*}
where
\begin{equation*}
A:=2\int_{a}^{b}\frac{v(x)dx}{2\pi\sqrt{(b-x)(x-a)}}-2N\log\left(\frac{b-a}{4}\right),
\end{equation*}
and the orthogonal polynomials $P_{N}(z)$ is approximated by (\ref{pn}).
\end{lem}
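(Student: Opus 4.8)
The statement is really a normalization identity, so the first thing I would do is strip it down to its arithmetic core. By construction the orthonormal polynomial is the monic polynomial divided by the square root of its weighted $L^{2}$ norm, $\mathcal{P}_{N}(z)=P_{N}(z)/\sqrt{h_{N}}$ with $h_{N}=\int_{a}^{b}[P_{N}(x)]^{2}w(x)\,dx$. Comparing with the asserted formula, the lemma is therefore equivalent to the single claim
\begin{equation*}
h_{N}=\frac{\pi(b-a)}{2}\,{\rm e}^{-A},\qquad {\rm e}^{-A}=\left(\frac{b-a}{4}\right)^{2N}\exp\left[-\frac{1}{\pi}\int_{a}^{b}\frac{v(x)}{\sqrt{(b-x)(x-a)}}\,dx\right].
\end{equation*}
The factor $\left((b-a)/4\right)^{2N}$ is recognizable as the $2N$-th power of the logarithmic capacity of $[a,b]$, the expected leading growth of $h_{N}$ for monic orthogonal polynomials; the exponential carries the contribution of the external field $v$, and the prefactor $\pi(b-a)/2$ is the surviving constant. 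So everything hinges on evaluating $h_{N}$ asymptotically.

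Since (\ref{pn}) furnishes $P_{N}$ only for $z\notin[a,b]$, the plan is to push the formula onto the cut and integrate there. Approaching $x\in(a,b)$ as $z=x\pm{\rm i}0$, the radical $\sqrt{(z-a)(z-b)}$ turns into $\pm{\rm i}\sqrt{(x-a)(b-x)}$ and the quarter-power factors in ${\rm e}^{-S_{1}}$ acquire the phases ${\rm e}^{\pm{\rm i}\pi/4}$, so both $S_{1}$ and $S_{2}$ develop imaginary parts. Writing the boundary value as $P_{N}(x\pm{\rm i}0)=R(x)\,{\rm e}^{\pm{\rm i}\Theta(x)}$ with a slowly varying real amplitude $R(x)$ and a phase $\Theta(x)$ that increases like $N$ across the interval, the real monic polynomial on the support is $P_{N}(x)\approx 2R(x)\cos\Theta(x)$.

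Substituting into $h_{N}=\int_{a}^{b}[P_{N}(x)]^{2}w(x)\,dx$ gives an integrand $\approx 4R(x)^{2}\cos^{2}\Theta(x)\,w(x)$. Because $\Theta$ oscillates rapidly, I would replace $\cos^{2}\Theta$ by its mean $\tfrac{1}{2}$ (a Riemann--Lebesgue/stationary-phase averaging), reducing the task to $h_{N}\approx 2\int_{a}^{b}R(x)^{2}w(x)\,dx$. The crucial simplification is that $R(x)^{2}$ combines the real part of $-2S_{2}$ with the amplitude coming from $S_{1}$, and this is engineered so that the weight $w={\rm e}^{-v}$ almost cancels the field dependence: what is left is the arcsine density $\tfrac{1}{\pi\sqrt{(b-x)(x-a)}}$ (whose integral over $[a,b]$ is exactly $1$), multiplied by the two global constants $\left(\tfrac{b-a}{4}\right)^{2N}$ and $\exp\left[-\tfrac{1}{\pi}\int_{a}^{b}v/\sqrt{(b-x)(x-a)}\right]$. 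Collecting these reproduces $h_{N}=\tfrac{\pi(b-a)}{2}{\rm e}^{-A}$.

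The main obstacle is exactly this cancellation bookkeeping: one must verify that the $v$-dependence of $R(x)^{2}w(x)$ collapses to the single constant $\tfrac{1}{\pi}\int_{a}^{b}v/\sqrt{(b-x)(x-a)}$ appearing in $A$, and that the only surviving $x$-dependence is the equilibrium (arcsine) density. Controlling the oscillatory average near the endpoints $a,b$, where the WKB amplitude is singular and the $\cos\Theta$ description degenerates, is the other delicate point; I would handle these with the usual turning-point matching. An alternative that avoids the boundary-value analysis entirely is to use the thermodynamic identity $h_{N}=D_{N+1}/D_{N}$ together with the free-energy expansion of the Hankel determinant from [\cite{C20}], but the direct evaluation above is the more transparent route and is the one I would write out.
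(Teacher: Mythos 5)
The paper does not prove this lemma at all: it is imported verbatim from [\cite{C20}] (Chen and Lawrence), and the citation is the entire ``proof.'' Your proposal therefore supplies a derivation where the paper supplies none, and the derivation you sketch is essentially correct and consistent with the Coulomb-fluid level of rigor used throughout. The arithmetic reduction is right: the lemma is equivalent to $h_{N}=\tfrac{\pi(b-a)}{2}\,{\rm e}^{-A}$ with ${\rm e}^{-A}=\left(\tfrac{b-a}{4}\right)^{2N}\exp\bigl[-\tfrac{1}{\pi}\int_{a}^{b}v(x)\,\bigl((b-x)(x-a)\bigr)^{-1/2}dx\bigr]$. The cancellation you flag as the main obstacle does go through explicitly: on the cut one finds $\bigl|{\rm e}^{-S_{1}(x\pm{\rm i}0)}\bigr|^{2}=\tfrac{b-a}{4\sqrt{(b-x)(x-a)}}$, while ${\rm Re}\,S_{2}(x+{\rm i}0)=-N\log\tfrac{b-a}{4}+\tfrac{1}{2\pi}\int_{a}^{b}v(y)\bigl((b-y)(y-a)\bigr)^{-1/2}dy-\tfrac{1}{2}v(x)$, the last term arising from the pole at $y=z$ crossing the cut (the ${\rm i}\pi\delta(y-x)$ contribution meeting the purely imaginary $\sqrt{(z-a)(z-b)}$). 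The $+v(x)$ in $-2\,{\rm Re}\,S_{2}$ cancels the weight ${\rm e}^{-v(x)}$ exactly, leaving the arcsine density times the constants, and $2\int_{a}^{b}\tfrac{b-a}{4\sqrt{(b-x)(x-a)}}dx=\tfrac{\pi(b-a)}{2}$ closes the computation. Two caveats: first, your statement that ``the real monic polynomial on the support is $P_{N}(x)\approx 2R(x)\cos\Theta(x)$'' should be phrased as the WKB connection formula (the polynomial is asymptotic to the \emph{sum} of the two boundary values of the outer approximation, not to either one); second, the $\cos^{2}\to\tfrac12$ averaging and the endpoint behavior are heuristic, but no more so than (\ref{pn}) itself. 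Your alternative route via $h_{N}=D_{N+1}/D_{N}$ and the free-energy expansion is in fact closer to how [\cite{C20}] actually obtains the constant $A$, so either write-up would be an acceptable substitute for the missing proof.
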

From above, one finds that
\begin{lem}
The orthonomal polynomials $\mathcal{P}_{N}(z)$ associated with the weight $w_{\alpha}(x;t)$ can be approximated by
\begin{equation}\label{eqt6}
\begin{split}
\mathcal{P}_{N}(z):=&\frac{1}{\sqrt{2\pi(b-a)}}\cdot\left(\frac{\sqrt{z-a}+\sqrt{z-b}}{\sqrt{b-a}}\right)^{2N}
\cdot\left[\left(\frac{z-b}{z-a}\right)^{\frac{1}{4}}+\left(\frac{z-a}{z-b}\right)^{\frac{1}{4}}\right]\\
&\cdot\exp\left[-\frac{\sqrt{(z-a)(z-b)}}{2\pi}\int_{a}^{b}\frac{v(x)}{(x-z)\sqrt{(b-x)(x-a)}}dx\right].
\end{split}
\end{equation}
\end{lem}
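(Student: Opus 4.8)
The plan is to prove (\ref{eqt6}) by substituting the explicit data of Lemma~\ref{lem1} into the normalization identity and then simplifying; the argument is purely algebraic, with no analytic subtlety. First I would write, using $P_N\sim{\rm e}^{-S_1-S_2}$ from (\ref{pn}),
\[
\mathcal{P}_N(z)=\sqrt{\frac{2}{\pi(b-a)}}\,{\rm e}^{A/2}\,{\rm e}^{-S_1(z)}\,{\rm e}^{-S_2(z)},
\]
and immediately replace ${\rm e}^{-S_1(z)}$ by the equivalent representation ${\rm e}^{-S_1(z)}=\frac12[(\frac{z-b}{z-a})^{1/4}+(\frac{z-a}{z-b})^{1/4}]$ stated earlier, which already supplies the bracketed factor of (\ref{eqt6}) together with a stray $\frac12$.

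It is convenient to abbreviate the two integrals that occur. Let
\[
I:=\frac{1}{2\pi}\int_a^b\frac{v(x)}{\sqrt{(b-x)(x-a)}}\,dx,\qquad
J(z):=\frac{\sqrt{(z-a)(z-b)}}{2\pi}\int_a^b\frac{v(x)}{(x-z)\sqrt{(b-x)(x-a)}}\,dx,
\]
so that splitting the bracket $[\frac{\sqrt{(z-a)(z-b)}}{x-z}+1]$ in $S_2$ gives ${\rm e}^{-S_2(z)}=(\frac{\sqrt{z-a}+\sqrt{z-b}}{2})^{2N}{\rm e}^{-J(z)}{\rm e}^{-I}$, while the definition of $A$ gives ${\rm e}^{A/2}=(\frac{4}{b-a})^{N}{\rm e}^{I}$. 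The key step is the exact cancellation ${\rm e}^{I}{\rm e}^{-I}=1$: the non-singular integral coming from the ``$+1$'' summand of $S_2$ cancels against the one contained in $A$, leaving only ${\rm e}^{-J(z)}$, which is precisely the last factor of (\ref{eqt6}).

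It then remains to collect the elementary prefactors in three independent groups. The $N$-th powers combine as $(\frac{4}{b-a})^{N}(\frac{\sqrt{z-a}+\sqrt{z-b}}{2})^{2N}=(\frac{\sqrt{z-a}+\sqrt{z-b}}{\sqrt{b-a}})^{2N}$, reproducing the corresponding factor of (\ref{eqt6}); the scalar constants combine as $\sqrt{2/(\pi(b-a))}\cdot\frac12=1/\sqrt{2\pi(b-a)}$; and the $S_1$ bracket together with the exponential ${\rm e}^{-J(z)}$ is already in the required form. Multiplying the three groups gives (\ref{eqt6}). The only ``obstacle'' is bookkeeping — keeping the powers of $2$ and $4$, the signs of $I$ and $J$, and the correct splitting of the $S_2$ integral under control — and I expect that verifying each of the three groups separately, before multiplying, will be the most reliable way to avoid slips.
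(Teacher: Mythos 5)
Your proposal is correct and follows exactly the route the paper intends: the paper simply states ``From above, one finds that'' and leaves the substitution of Lemma~\ref{lem1} into (\ref{pn}) implicit, which is precisely the bookkeeping you carry out (the $e^{I}e^{-I}$ cancellation, the merging of $(4/(b-a))^{N}$ with $((\sqrt{z-a}+\sqrt{z-b})/2)^{2N}$, and the constant $\tfrac12$ absorbed into $\sqrt{2/(\pi(b-a))}$ all check out).
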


For our problem
\begin{equation*}
v(x)=-\alpha\log x+x+\frac{t}{x},~x\in[0,\infty),~\alpha>-1,~t\geq0,
\end{equation*}
To determine the asymptotic expansion of the orthonomal polynomials $\mathcal{P}_{N}(z)$, for large $N$, and $z\notin(a,b)$, we first handle the below integral,
\begin{equation*}
\begin{split}
I(z):&=\frac{1}{2\pi}\int_{a}^{b}\frac{v(x)}{(x-z)\sqrt{(b-x)(x-a)}}dx\\
&=-\frac{\alpha}{2\pi}\int_{a}^{b}\frac{\log x}{(x-z)\sqrt{(b-x)(x-a)}}dx+\frac{1}{2\pi}\int_{a}^{b}\frac{x}{(x-z)\sqrt{(b-x)(x-a)}}dx\\
&+\frac{1}{2\pi}\int_{a}^{b}\frac{t}{(x^{2}-zx)\sqrt{(b-x)(x-a)}}dx\\
&=\frac{\alpha}{2\sqrt{(z-a)(z-b)}}\log\left[\frac{2ab-(a+b)z+2\sqrt{ab}\sqrt{(a-z)(b-z)}}{\left(\sqrt{a-z}+\sqrt{b-z}\right)^{2}}\right]+\frac{1}{2}\\
&-\frac{z}{2\sqrt{(z-a)(z-b)}}-\frac{t}{2z\sqrt{ab}}-\frac{t}{2z\sqrt{(z-a)(z-b)}},\\
\end{split}
\end{equation*}
which is obtained by using the integrals in the Appendix. Then with some elementary calculations we see that,
\begin{lem}\label{lem4.3}
For $N\rightarrow\infty$, the normalized polynomials associated with the weight $w_{\alpha}(x;t)$ are approximated by
\begin{equation}\label{eqt2}
\begin{split}
\mathcal{P}_{N}(z)\sim & \frac{1}{\sqrt{2\pi(b-a)}}\cdot
\left(\frac{\sqrt{z-a}+\sqrt{z-b}}{\sqrt{b-a}}\right)^{2N}\cdot\left[\left(\frac{z-b}{z-a}\right)^{\frac{1}{4}}+\left(\frac{z-a}{z-b}\right)^{\frac{1}{4}}\right]\\
&\cdot\exp\left\{-\frac{\alpha}{2}\log\left[\frac{2ab-(a+b)z+2\sqrt{ab}\sqrt{(a-z)(b-z)}}{\left(\sqrt{a-z}+\sqrt{b-z}\right)^{2}}\right]\right\}\\
&\cdot\exp\left[\frac{t}{2z}+\frac{t\sqrt{(z-a)(z-b)}}{2z\sqrt{ab}}+\frac{z}{2}-\frac{\sqrt{(z-a)(z-b)}}{2}\right].\\
\end{split}
\end{equation}
where $z\notin[a,b]$.
\end{lem}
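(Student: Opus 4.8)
The plan is to obtain \eqref{eqt2} by inserting the already-evaluated integral $I(z)$ into the general formula of the preceding lemma and simplifying. Writing $w_{\alpha}(x;t)=\mathrm{e}^{-v(x)}$ with $v(x)=-\alpha\log x+x+\frac{t}{x}$, the preceding lemma \eqref{eqt6} represents $\mathcal{P}_N(z)$ as the product of the prefactor $\frac{1}{\sqrt{2\pi(b-a)}}$, the factor $\left(\frac{\sqrt{z-a}+\sqrt{z-b}}{\sqrt{b-a}}\right)^{2N}$, the factor $\left(\frac{z-b}{z-a}\right)^{1/4}+\left(\frac{z-a}{z-b}\right)^{1/4}$, and the exponential $\exp\!\left[-\sqrt{(z-a)(z-b)}\,I(z)\right]$, where $I(z)$ is exactly the integral displayed immediately above the statement. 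Since the first three factors already coincide verbatim with those appearing in \eqref{eqt2}, the entire task reduces to converting the single exponential $\exp\!\left[-\sqrt{(z-a)(z-b)}\,I(z)\right]$ into the two exponential factors on the right-hand side of \eqref{eqt2}.

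First I would evaluate $I(z)$ by splitting $v(x)$ into its three summands, giving three principal-value integrals over $[a,b]$ carrying respectively $\log x$, $x$, and $t/x$. The integral carrying $x$ is a standard Cauchy-type integral; the one carrying $\log x$ yields the $\alpha$-logarithm with the argument $\frac{2ab-(a+b)z+2\sqrt{ab}\sqrt{(a-z)(b-z)}}{\left(\sqrt{a-z}+\sqrt{b-z}\right)^{2}}$. For the $t/x$ integral I would use the partial fraction $\frac{1}{x(x-z)}=\frac{1}{z}\big(\frac{1}{x-z}-\frac{1}{x}\big)$ to split it into $\int_a^b\frac{dx}{(x-z)\sqrt{(b-x)(x-a)}}$ and $\int_a^b\frac{dx}{x\sqrt{(b-x)(x-a)}}$, which evaluate to $-\frac{\pi}{\sqrt{(z-a)(z-b)}}$ and $\frac{\pi}{\sqrt{ab}}$ respectively, producing the two terms $-\frac{t}{2z\sqrt{(z-a)(z-b)}}$ and $-\frac{t}{2z\sqrt{ab}}$. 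All three evaluations are read off from the Appendix formulas and reassemble into the closed form of $I(z)$ quoted before the statement.

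Next I would multiply this closed form by $-\sqrt{(z-a)(z-b)}$ and distribute over the five summands. In the $\alpha$-term the radical cancels against the $\frac{1}{\sqrt{(z-a)(z-b)}}$ present in $I(z)$, leaving $-\frac{\alpha}{2}\log[\cdots]$; the constant $\frac12$ gives $-\frac{\sqrt{(z-a)(z-b)}}{2}$; the term $-\frac{z}{2\sqrt{(z-a)(z-b)}}$ gives $+\frac{z}{2}$; and the two $t$-terms give $+\frac{t\sqrt{(z-a)(z-b)}}{2z\sqrt{ab}}$ and $+\frac{t}{2z}$. Grouping the logarithmic contribution into one exponential and the four algebraic contributions into another reproduces precisely the two exponential factors of \eqref{eqt2}, which combined with the unchanged prefactor and the first two factors completes the identification.

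The main obstacle is the careful evaluation of the three principal-value integrals and, above all, the consistent choice of branches of $\sqrt{(z-a)(z-b)}$ and of $\sqrt{a-z},\sqrt{b-z}$ for $z\notin[a,b]$, so that every sign in the closed form of $I(z)$ is correct; the $t/x$ integral is the most delicate since its integrand is singular both at the origin and at $x=z$. Once the branches are fixed consistently, the passage from $I(z)$ to \eqref{eqt2} is purely algebraic, matching the ``some elementary calculations'' indicated before the statement.
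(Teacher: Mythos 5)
Your proposal matches the paper's own derivation: the paper likewise evaluates the integral $I(z)$ term by term via the Appendix identities (using (B1) for the $t/x$ piece, which your partial-fraction splitting reproduces), then multiplies the closed form by $-\sqrt{(z-a)(z-b)}$ and distributes to obtain the two exponential factors of \eqref{eqt2}, the remaining factors of \eqref{eqt6} being unchanged. Your sign bookkeeping for the five resulting terms agrees with the paper's stated $I(z)$, so the argument is correct and essentially identical.
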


In here, the end points $a$ and $b$ are approximated by
\begin{lem}{\rm [\cite{C-C}]}
\begin{equation}\label{eqt8}
\begin{split}
a=a_{N}:&=\frac{t^\frac{2}{3}}{2(2N+\alpha)^{\frac{1}{3}}}+\frac{\alpha t^\frac{1}{3}}{3(2N+\alpha)^{\frac{2}{3}}}+\frac{\alpha^2}{6(2N+\alpha)}
+\mathcal{O}\left((2N+\alpha)^{-\frac{4}{3}}\right),
\end{split}
\end{equation}
and
\begin{equation}\label{eqt9}
\begin{split}
b=b_{N}:&=2(2N+\alpha)+\frac{3t^\frac{2}{3}}{2(2N+\alpha)^{\frac{1}{3}}}-\frac{\alpha t^\frac{1}{3}}{(2N+\alpha)^{\frac{2}{3}}}-\frac{\alpha^2}{6(2N+\alpha)}
+\mathcal{O}\left((2N+\alpha)^{-\frac{4}{3}}\right).
\end{split}
\end{equation}
\end{lem}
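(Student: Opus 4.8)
The plan is to determine the endpoints $a=a_N$ and $b=b_N$ directly from the two conditions that the Coulomb-fluid analysis imposes on the support of the equilibrium density. These are the supplementary condition $\int_a^b v'(x)\,[(b-x)(x-a)]^{-1/2}dx=0$ and the normalization $\int_a^b x\,v'(x)\,[(b-x)(x-a)]^{-1/2}dx=2\pi N$, both of which are already recorded in the excerpt. For the potential at hand $v'(x)=1-\frac{\alpha}{x}-\frac{t}{x^{2}}$, so each condition collapses to a linear combination of the elementary integrals $\int_a^b x^{k}\,[(b-x)(x-a)]^{-1/2}dx$ for $k=1,0,-1,-2$.

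First I would evaluate these integrals, which are classical and available from the Appendix: for $0<a<b$ one has $\int_a^b [(b-x)(x-a)]^{-1/2}dx=\pi$, $\int_a^b x\,[(b-x)(x-a)]^{-1/2}dx=\pi(a+b)/2$, $\int_a^b x^{-1}[(b-x)(x-a)]^{-1/2}dx=\pi(ab)^{-1/2}$, and $\int_a^b x^{-2}[(b-x)(x-a)]^{-1/2}dx=\pi(a+b)/[2(ab)^{3/2}]$. Introducing the arithmetic and geometric means $p:=(a+b)/2$ and $s:=\sqrt{ab}$, the supplementary condition becomes $1-\alpha/s-tp/s^{3}=0$ and the normalization becomes $p-\alpha-t/s=2N$. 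Thus the whole problem reduces to solving these two algebraic relations for $p$ and $s$.

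Next I would solve the system asymptotically. Writing $M:=2N+\alpha$, the normalization gives $p=M+t/s$; substituting into the supplementary relation and clearing denominators yields the single quartic $s^{4}-\alpha s^{3}-tMs-t^{2}=0$. The dominant balance $s^{4}\sim tMs$ fixes the scale $s\sim(tM)^{1/3}$, so setting $s=(tM)^{1/3}u$ converts the quartic into $u^{4}-u-\alpha t^{-1/3}M^{-1/3}u^{3}-t^{2/3}M^{-4/3}=0$, whose root near $u=1$ admits an expansion in powers of $M^{-1/3}$. Substituting back produces $s$ and then $p=M+t/s$ to the required order, after which the endpoints follow from Vieta's formulas $a=p-\sqrt{p^{2}-s^{2}}$ and $b=p+\sqrt{p^{2}-s^{2}}$. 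Since $s^{2}/p^{2}=O(M^{-4/3})$, one expands $\sqrt{p^{2}-s^{2}}=p-\frac{s^{2}}{2p}-\frac{s^{4}}{8p^{3}}-\cdots$, giving $a=\frac{s^{2}}{2p}+\cdots$ and $b=2p-\frac{s^{2}}{2p}-\cdots$. A quick leading-order check confirms the scaling: $a\approx s^{2}/(2p)\approx t^{2/3}M^{2/3}/(2M)=t^{2/3}/(2M^{1/3})$ and $b\approx 2p\approx 2M$, in agreement with (\ref{eqt8})--(\ref{eqt9}).

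The only genuine difficulty is the asymptotic bookkeeping. One must carry $u$, and hence $s$ and $p$, through three successive orders in $M^{-1/3}$ so that the coefficients of $t^{1/3}M^{-2/3}$ and $M^{-1}$ emerge correctly from the interplay of the $\alpha s^{3}$ term in the quartic with the $s^{4}/(8p^{3})$ and $t/s$ corrections. Care is required because $b$ grows like $M$ while $a$ decays like $M^{-1/3}$, so the two endpoints live at very different scales and the cancellations that produce, for instance, the shared term $\mp\alpha^{2}/(6M)$ in $a_N$ and $b_N$ must be tracked precisely. One also relies on $t>0$ to guarantee $a>0$, which is exactly what keeps the integrals $\int_a^b x^{-1}[\cdots]^{-1/2}dx$ and $\int_a^b x^{-2}[\cdots]^{-1/2}dx$ finite and the preceding derivation valid.
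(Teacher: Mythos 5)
Your proposal is correct, but it takes a genuinely different route from the paper: the paper offers no proof of this lemma at all, simply citing \cite{C-C} for the endpoint expansions, whereas you give a complete, self-contained derivation from the two Coulomb-fluid endpoint conditions that the paper itself records in Section 3.1. I checked your computation: with $v'(x)=1-\alpha/x-t/x^{2}$, the integrals (A1)--(A3) and (A5)-type identities reduce the supplementary and normalization conditions to $1-\alpha/s-tp/s^{3}=0$ and $p-\alpha-t/s=2N$ with $p=(a+b)/2$, $s=\sqrt{ab}$; eliminating $p=M+t/s$ (where $M=2N+\alpha$) yields your quartic $s^{4}-\alpha s^{3}-tMs-t^{2}=0$, and the root $s=(tM)^{1/3}\bigl(1+\tfrac{\alpha}{3t^{1/3}}M^{-1/3}+\tfrac{\alpha^{2}}{9t^{2/3}}M^{-2/3}+\mathcal{O}(M^{-1})\bigr)$ gives
\begin{equation*}
\frac{s^{2}}{2p}=\frac{t^{2/3}}{2M^{1/3}}+\frac{\alpha t^{1/3}}{3M^{2/3}}+\frac{\alpha^{2}}{6M}+\mathcal{O}\left(M^{-4/3}\right),\qquad 2p=2M+\frac{2t^{2/3}}{M^{1/3}}-\frac{2\alpha t^{1/3}}{3M^{2/3}}+\mathcal{O}\left(M^{-4/3}\right),
\end{equation*}
so that $a=p-\sqrt{p^{2}-s^{2}}\sim s^{2}/(2p)$ and $b=2p-a$ reproduce (\ref{eqt8}) and (\ref{eqt9}) exactly, including the shared $\mp\alpha^{2}/(6M)$ term (the correction $s^{4}/(8p^{3})=\mathcal{O}(M^{-5/3})$ is indeed negligible at this order). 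What your approach buys is transparency and verifiability within the present paper: the reader need not consult \cite{C-C}, and the derivation makes clear exactly which structural inputs (the two endpoint conditions plus four elementary integrals) determine the expansion. What the citation buys the authors is brevity and consistency with the double-scaling analysis of \cite{C-C}, where these endpoints are obtained in the course of a larger Painlev\'e III computation. Your remark that $t>0$ is needed to keep $a>0$ and the negative-power integrals convergent is a worthwhile caveat that the paper leaves implicit.
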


\subsection{The asymptotic expression of $\mathcal{P}_{N}(z),~z\notin[a,b]$.}
{\color{black}
\begin{thm}\label{thmnew1}
Let $\eta:=-\frac{z}{b-a}$, then for large $N$, the normalized polynomials associated with the weight $w_{\alpha}(x;t)$ are approximated by
\begin{equation}\label{z}
\begin{split}
\mathcal{P}_{N}(z)\sim&\frac{(-1)^{N}\eta^{-\frac{1}{4}}}{\sqrt{2\pi(b-a)}}(-z)^{-\frac{\alpha}{2}}{\rm e}^{\frac{z}{2}+\frac{t}{2z}}\cdot\exp\bigg\{(2N+1+\alpha)\log\left(\sqrt{\eta}+\sqrt{\eta+1}\right)\\
&+\left[\frac{b-a}{2}-\frac{(b-a)t}{2z\sqrt{ab}}\right]\sqrt{\eta(\eta+1)}\bigg\}.
\end{split}
\end{equation}
where $z\notin[a,b]$.
\end{thm}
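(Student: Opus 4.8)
The plan is to start from the closed asymptotic form (\ref{eqt2}) of Lemma \ref{lem4.3}, perform the exact change of variable $z=-\eta(b-a)$, and then extract the leading behaviour as $N\to\infty$ using the endpoint expansions (\ref{eqt8})--(\ref{eqt9}). The governing feature of those expansions is that $a=a_{N}\to0$, $b=b_{N}\to\infty$ and $b-a\to\infty$, while in the regime of interest $z$ stays bounded, so $\eta=-z/(b-a)\to0$. The guiding principle is that one must retain the full $\eta$-dependence only in those factors amplified either by the power $2N$ or by the large multiplier $b-a$; in every bounded prefactor it suffices to keep the leading term.

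First I would treat the ``large'' factors. Writing the Joukowski-type factor as $\left(\frac{\sqrt{z-a}+\sqrt{z-b}}{\sqrt{b-a}}\right)^{2}=\frac{2z-a-b+2\sqrt{(z-a)(z-b)}}{b-a}$ and fixing the branch of $\sqrt{(z-a)(z-b)}$ that behaves like $z$ at infinity (so that under the substitution $\sqrt{(z-a)(z-b)}\sim-(b-a)\sqrt{\eta(\eta+1)}$), this factor tends to $-(\sqrt{\eta}+\sqrt{\eta+1})^{2}$, since $2\eta+1+2\sqrt{\eta(\eta+1)}=(\sqrt{\eta}+\sqrt{\eta+1})^{2}$. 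Raising to the $N$-th power produces the sign $(-1)^{N}$ together with $(\sqrt{\eta}+\sqrt{\eta+1})^{2N}$. Next, the argument of the $\alpha$-logarithm in (\ref{eqt2}) simplifies, after dropping the subleading $2ab$ and $2\sqrt{ab}\sqrt{(a-z)(b-z)}$ pieces of its numerator (whose leading surviving term is $-(a+b)z\sim -z(b-a)$) and using $(\sqrt{a-z}+\sqrt{b-z})^{2}\sim(b-a)(\sqrt{\eta}+\sqrt{\eta+1})^{2}$ in its denominator, to $\frac{-z}{(\sqrt{\eta}+\sqrt{\eta+1})^{2}}$; hence this factor becomes $(-z)^{-\alpha/2}(\sqrt{\eta}+\sqrt{\eta+1})^{\alpha}$, supplying both the $(-z)^{-\alpha/2}$ prefactor and a further power $\alpha$ of $\sqrt{\eta}+\sqrt{\eta+1}$. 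Finally the last exponential in (\ref{eqt2}) keeps $\frac{z}{2}+\frac{t}{2z}$ intact and, again through $\sqrt{(z-a)(z-b)}\sim-(b-a)\sqrt{\eta(\eta+1)}$, turns the remaining two terms into $\left[\frac{b-a}{2}-\frac{(b-a)t}{2z\sqrt{ab}}\right]\sqrt{\eta(\eta+1)}$.

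It then remains to dispose of the bounded prefactors and to collect exponents. Using $a\to0$ one has $\frac{z-b}{z-a}\to\frac{1}{\eta}$, so the bracket $\left(\frac{z-b}{z-a}\right)^{1/4}+\left(\frac{z-a}{z-b}\right)^{1/4}$ reduces, as $\eta\to0$, to $\eta^{-1/4}$; this is the origin of the $\eta^{-1/4}$ in (\ref{z}), the accompanying bounded factor tending to $1$ being discarded. Gathering the three contributions $2N$, $1$ and $\alpha$ to the exponent of $\sqrt{\eta}+\sqrt{\eta+1}$ and writing $(\sqrt{\eta}+\sqrt{\eta+1})^{2N+1+\alpha}=\exp\!\left[(2N+1+\alpha)\log(\sqrt{\eta}+\sqrt{\eta+1})\right]$, then combining with the unchanged factor $\frac{1}{\sqrt{2\pi(b-a)}}$, the sign $(-1)^{N}$, the prefactor $(-z)^{-\alpha/2}$ and ${\rm e}^{z/2+t/(2z)}$, assembles exactly into (\ref{z}).

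The part requiring genuine care is the control of the discarded error terms, since the approximations above are amplified. The relative corrections neglected in the Joukowski factor are of order $a/(b-a)=O\big((2N+\alpha)^{-4/3}\big)$ by (\ref{eqt8})--(\ref{eqt9}), so after the $N$-th power they contribute $\big(1+O((2N+\alpha)^{-4/3})\big)^{N}=1+o(1)$; likewise the terms omitted from $\sqrt{(z-a)(z-b)}$ must be shown to stay negligible even after multiplication by $b-a\sim2(2N+\alpha)$, and the pieces dropped from the numerator of the $\alpha$-logarithm (of orders $(2N+\alpha)^{2/3}$ and $(2N+\alpha)^{5/6}$ against the retained $(2N+\alpha)^{1}$ in the bounded-$z$ regime) must be checked to be subdominant. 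A secondary technical point is the consistent choice of branches for all fractional powers, which is what makes the accumulated phase collapse to the single sign $(-1)^{N}$ and renders the real prefactors $\eta^{-1/4}$ and $(-z)^{-\alpha/2}$ unambiguous. Once these estimates are secured, the rest of the argument is the routine substitution and bookkeeping described above.
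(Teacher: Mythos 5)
Your proposal is correct and follows exactly the route the paper takes: direct substitution of $\eta=-z/(b-a)$ into (\ref{eqt2}) with the branch choices $z-a=[\eta(b-a)+a]{\rm e}^{{\rm i}\pi}$, $z-b=[\eta(b-a)+b]{\rm e}^{{\rm i}\pi}$, followed by the simplifications $a\to0$, $b\sim b-a$ justified by (\ref{eqt8})--(\ref{eqt9}) (the paper's own proof is a one-sentence version of this, so your tracking of each factor and of the error amplification is if anything more careful than the original). The only presentational wobble is the bracket $\left(\frac{z-b}{z-a}\right)^{1/4}+\left(\frac{z-a}{z-b}\right)^{1/4}$: it equals $\eta^{-1/4}(1+\eta)^{-1/4}\left(\sqrt{\eta}+\sqrt{1+\eta}\right)$, so it is the source of both the $\eta^{-1/4}$ and the ``$+1$'' in the exponent $2N+1+\alpha$, which is consistent with your final bookkeeping of the contributions $2N$, $1$ and $\alpha$.
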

\begin{proof}
Substituting $\eta:=-\frac{z}{b-a}$, $|\eta|\ll1$, into (\ref{eqt2}), with choosing the branch $-(b-a)\eta-a=[(b-a)\eta+a]{\rm e}^{{\rm i}\pi}$ and $-(b-a)\eta-b=[(b-a)\eta+b]{\rm e}^{{\rm i}\pi}$, the result is obtained immediately.
\end{proof}

Furthermore, using the inverse hyperbolic sine and the formula, [\cite{new2}, cf.9.121.26], we find
\begin{equation}\label{w}
\begin{split}
\left(2N+\alpha+1\right)&\log\left(\sqrt{\eta}+\sqrt{1+\eta}\right)\sim\left(2N+\alpha\right)\sqrt{\eta}\cdot{_{2}F_{1}}\left(\frac{1}{2},\frac{1}{2};\frac{3}{2};-\eta\right)\\
&=\left(2N+\alpha\right)\sqrt{\eta}\cdot\sum_{k=0}^{\infty}\frac{\left(\frac{1}{2}\right)_{k}\left(\frac{1}{2}\right)_{k}}{\left(\frac{3}{2}
\right)_{k}k!}(-\eta)^{k}\\
&\sim\left(2N+\alpha\right)\sqrt{\eta},
\end{split}
\end{equation}
since $|\eta|\ll1$, for large $N$ and here, the Pochhammer symbol (also called the shifted factorial)
\begin{equation*}
(x)_{k}:=\frac{\Gamma\left(k+x\right)}{\Gamma(x)}=x(x+1)\cdots(x+k-1).
\end{equation*}

For $|\eta|<1$, $\sqrt{1+\eta}$, applying the binomial theorem, we get
\begin{equation}\label{0.9}
\sqrt{1+\eta}=\frac{1}{\Gamma\left(-\frac{1}{2}\right)}\sum_{k=0}^{\infty}(-1)^{k}\frac{\Gamma\left(k-\frac{1}{2}\right)}{\Gamma\left(k+1\right)}\eta^{k},
\end{equation}
we see that
\begin{equation}\label{x}
\begin{split}
\left[\frac{b-a}{2}-\frac{(b-a)t}{2z\sqrt{ab}}\right]\sqrt{\eta(\eta+1)}&=\left[\frac{b-a}{2}-\frac{(b-a)t}{2z\sqrt{ab}}\right]\frac{\sqrt{\eta}}{\Gamma\left(-\frac{1}{2}\right)}\sum_{k=0}^{\infty}(-1)^{k}\frac{\Gamma\left(k-\frac{1}{2}\right)}{\Gamma\left(k+1\right)}\eta^{k}\\
&\sim\left[\frac{b-a}{2}-\frac{(b-a)t}{2z\sqrt{ab}}\right]\sqrt{\eta}.
\end{split}
\end{equation}
substituting (\ref{w}) and (\ref{x}) into (\ref{z}), and bear in mind $\eta=-z(b-a)^{-1}$, then we find

\begin{cor}\label{thm1}
For $N\rightarrow\infty$, the normalized polynomials associated with the weight $w_{\alpha}(x;t)$ are approximated by
\begin{equation}\label{PN}
\begin{split}
\mathcal{P}_{N}(z)\sim&\frac{(-1)^{N}}{\sqrt{2\pi}}(-z)^{-\frac{\alpha}{2}-\frac{1}{4}}(b-a)^{-\frac{1}{4}}{\rm e}^{\frac{z}{2}+\frac{t}{2z}}\\
&\cdot\exp\bigg\{\bigg[\frac{2N+\alpha}{(b-a)^{\frac{1}{2}}}
+\frac{(b-a)^{\frac{1}{2}}}{2}-\frac{(b-a)^{\frac{1}{2}}t}{2z\sqrt{ab}}\bigg](-z)^{\frac{1}{2}}\bigg\},
\end{split}
\end{equation}
where $z\notin[a,b]$.
\end{cor}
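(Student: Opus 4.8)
The plan is to obtain the statement as a direct consequence of Theorem~\ref{thmnew1} together with the two auxiliary expansions (\ref{w}) and (\ref{x}). Indeed, (\ref{z}) already expresses $\mathcal{P}_N(z)$ through the variable $\eta=-z/(b-a)$, so the only work remaining is to insert the small-$\eta$ asymptotics of the two terms sitting inside the exponential and then to re-express every occurrence of $\eta$ in terms of $z$ and $b-a$. First I would record the two elementary identities $\sqrt{\eta}=(-z)^{1/2}(b-a)^{-1/2}$ and $\eta^{-1/4}=(-z)^{-1/4}(b-a)^{1/4}$, valid under the branch choices fixed in the proof of Theorem~\ref{thmnew1}; these are the bridge between the $\eta$-form of (\ref{z}) and the $z$-form claimed here.

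Next I would treat the exponent. By (\ref{w}) the first summand obeys $(2N+\alpha+1)\log(\sqrt{\eta}+\sqrt{1+\eta})\sim(2N+\alpha)\sqrt{\eta}$, while by (\ref{x}) the second summand obeys $\big[\tfrac{b-a}{2}-\tfrac{(b-a)t}{2z\sqrt{ab}}\big]\sqrt{\eta(\eta+1)}\sim\big[\tfrac{b-a}{2}-\tfrac{(b-a)t}{2z\sqrt{ab}}\big]\sqrt{\eta}$. Adding these and pulling out the common factor $\sqrt{\eta}=(-z)^{1/2}(b-a)^{-1/2}$ turns the bracketed coefficient into exactly $\frac{2N+\alpha}{(b-a)^{1/2}}+\frac{(b-a)^{1/2}}{2}-\frac{(b-a)^{1/2}t}{2z\sqrt{ab}}$, which reproduces the exponent in (\ref{PN}). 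For the prefactor I would combine $\eta^{-1/4}=(-z)^{-1/4}(b-a)^{1/4}$ with the factor $(b-a)^{-1/2}$ coming from $\sqrt{2\pi(b-a)}$; the powers of $(b-a)$ collapse to $(b-a)^{-1/4}$, and multiplying by $(-z)^{-\alpha/2}$ yields $(-z)^{-\alpha/2-1/4}(b-a)^{-1/4}$, leaving the factor $(-1)^N(2\pi)^{-1/2}{\rm e}^{z/2+t/(2z)}$ untouched. Collecting these pieces gives (\ref{PN}).

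The one point that deserves care — and the main obstacle — is that (\ref{w}) and (\ref{x}) are asymptotic relations that are being substituted inside an exponential, so leading-order equivalence of the exponents is not by itself enough; one must check that the terms discarded from the hypergeometric and binomial series contribute corrections that tend to zero after exponentiation. Here the endpoint asymptotics (\ref{eqt8})--(\ref{eqt9}) are essential: they give $b-a\sim 2(2N+\alpha)\to\infty$ for fixed $z$, hence $|\eta|\sim|z|/\big(2(2N+\alpha)\big)\ll1$. Each neglected term in (\ref{w}) and (\ref{x}) carries an extra power of $\eta$ relative to the retained $\sqrt{\eta}$, so a short order estimate shows the dropped exponent contributions are $O\big((2N+\alpha)^{-1/2}\big)\to0$; this legitimizes replacing the exact exponent by its leading part and makes the substitution into (\ref{z}) rigorous. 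With that verification in place, the corollary follows by the bookkeeping described above.
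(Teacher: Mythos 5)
Your proposal is correct and follows essentially the same route as the paper: the text preceding Corollary~\ref{thm1} obtains it precisely by substituting (\ref{w}) and (\ref{x}) into (\ref{z}) and rewriting $\sqrt{\eta}=(-z)^{1/2}(b-a)^{-1/2}$, exactly as you describe. Your additional order estimate showing that the discarded terms of the hypergeometric and binomial series contribute only $O\big((2N+\alpha)^{-1/2}\big)$ to the exponent is a justification the paper omits, but it does not change the argument.
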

}

\begin{rem}\label{400}
Letting $\alpha=0,~ t=0$, the classical result for Laguerre polynomials due to Perron [\cite{Szego}] is recovered,
\begin{equation*}
\mathcal{P}_{N}(z)\sim\frac{(-1)^{N}}{2\sqrt{\pi}}\left(-zN\right)^{-\frac{1}{4}}\exp\left[\frac{z}{2}+2\sqrt{-zN}\right],~~z\notin[0,\infty).
\end{equation*}
\end{rem}
\begin{rem}\label{20}
To meet the demands of some proofs in the following, we first show the Laplace method here,
\begin{equation}\label{laplace1}
\int_{a}^{b}f(t){\rm e}^{-\lambda g(t)}dt\sim {\rm e}^{-\lambda g(c)}f(c)\sqrt{\frac{2\pi}{\lambda g''(c)}}~,~~~~{\rm as}~~\lambda\rightarrow\infty,
\end{equation}
where $g$ assumes a strict minimum over $[a,b]$ at an interior critical point $c$, such that
\begin{equation*}
\begin{split}
g'(c)=0,~~~~~g''(c)>0~~~~~{\rm and}~~~~~f(c)\neq0.\\
\end{split}
\end{equation*}
An alternative expression for Laplace method may be stated as follows:

If for $x\in[a,b]$, the real continuous function $g(x)$ has as its maximum the value $g(b)$, then as $N\rightarrow\infty$
\begin{equation}\label{laplace2}
\int_{a}^{b}f(x){\rm e}^{Ng(x)}dx\sim\frac{f(b){\rm e}^{Ng(b)}}{Ng'(b)}.
\end{equation}
\end{rem}

\section{The asymptotic behavior of $\lambda_{N}$ }
\begin{thm}
The smallest eigenvalue $\lambda_{N}$ of the $\mathcal{H}_{N}$ can be approximated by
\begin{equation*}
\begin{split}
\lambda_{N}\sim8\pi^{\frac{3}{2}}\left[(4N+2\alpha)^{\frac{1}{2}}+\frac{t}{2\sqrt{a_{N}}}-2t\right]^{\frac{1}{2}}\exp\left[1+t-2(4N+2\alpha)^{\frac{1}{2}}-\frac{t}{\sqrt{a_{N}}}\right],
\end{split}
\end{equation*}
where
\begin{equation*}
\begin{split}
a_{N}\sim\frac{t^\frac{2}{3}}{2(2N+\alpha)^{\frac{1}{3}}}+\frac{\alpha t^\frac{1}{3}}{3(2N+\alpha)^{\frac{2}{3}}}+\frac{\alpha^2}{6(2N+\alpha)}+\frac{5\alpha^{3}}{81t^{\frac{1}{3}}(2N+\alpha)^{\frac{4}{3}}}.
\end{split}
\end{equation*}
\end{thm}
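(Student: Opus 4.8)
The plan is to squeeze $\lambda_N$ between the variational lower bound already recorded in (\ref{e4}) and a matching upper bound, so that it is pinned to $2\pi/\sum_{k=0}^N\mathcal{K}_{kk}$, and then to evaluate this trace explicitly. Recall from (\ref{e3}) that $\lambda_N=2\pi/\mu_{\max}(\mathcal{K})$, where $\mathcal{K}=(\mathcal{K}_{jk})_{j,k=0}^N$ is positive definite and $\mu_{\max}$ is its largest eigenvalue; since $\mu_{\max}\le\operatorname{tr}\mathcal{K}=\sum_k\mathcal{K}_{kk}$, the content of the theorem is that this trace bound is asymptotically sharp and that $\operatorname{tr}\mathcal{K}$ has the stated closed form.

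First I would evaluate the diagonal entries $\mathcal{K}_{kk}=\int_{-\pi}^{\pi}|\mathcal{P}_k(e^{i\phi})|^2 d\phi$. As the coefficients of $\mathcal{P}_k$ are real, $|\mathcal{P}_k(e^{i\phi})|^2=\mathcal{P}_k(e^{i\phi})\mathcal{P}_k(e^{-i\phi})$, and inserting the large-$N$ asymptotic (\ref{PN}) of Corollary \ref{thm1} (valid for $z\notin[a,b]$, hence on the unit circle away from $z=1$) turns this into a Laplace-type integral $\int_{-\pi}^{\pi}\Phi_k(\phi)e^{R_k(\phi)}d\phi$. The exponent is governed by the dominant factor $\exp\{c(z)(-z)^{1/2}\}$ with $c(z)\approx(4k+2\alpha)^{1/2}-(b-a)^{1/2}t/(2z\sqrt{ab})$; with the branch fixed in Theorem \ref{thmnew1} one finds $\operatorname{Re}[(-z)^{1/2}]=\sin(|\phi|/2)$, so the real part of $R_k$ attains its strict maximum at $\phi=\pi$, i.e. at $z=-1$, the neighbourhood of $\phi=0$ being suppressed. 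I would then apply the Laplace formula (\ref{laplace1}) of Remark \ref{20}: the value at the critical point yields the exponential $\exp[2(4k+2\alpha)^{1/2}+t/\sqrt{a_k}-(1+t)]$ (using $b-a\sim 2(2k+\alpha)$ and $\sqrt{ab}\sim\sqrt{a_k}\,(b-a)^{1/2}$ from (\ref{eqt8})--(\ref{eqt9})), while $R_k''(\pi)$ supplies the algebraic prefactor of order $(4k+2\alpha)^{-1/4}$.

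Next I would sum over $k$. Since $\mathcal{K}_{kk}$ grows like $\exp[2(4k+2\alpha)^{1/2}]$, consecutive ratios tend to $1$ and the sum is \emph{not} dominated by its last term; instead I would estimate $\sum_{k=0}^N\mathcal{K}_{kk}$ by the second Laplace formula (\ref{laplace2}), replacing the sum by $\mathcal{K}_{NN}/g'(N)$ with $g(k)$ the exponent and $g'(N)\sim 4/(4N+2\alpha)^{1/2}$. This produces exactly the extra factor $\sim(4N+2\alpha)^{1/2}/4$ which, combined with the prefactor of $\mathcal{K}_{NN}$, gives the stated $8\pi^{3/2}[(4N+2\alpha)^{1/2}+t/(2\sqrt{a_N})-2t]^{1/2}$ and the exponential $\exp[1+t-2(4N+2\alpha)^{1/2}-t/\sqrt{a_N}]$. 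For the matching upper bound I would test the Rayleigh quotient (\ref{e3}) against $\xi_k=\sqrt{\mathcal{K}_{kk}}/\sqrt{\operatorname{tr}\mathcal{K}}$: the off-diagonal entries $\mathcal{K}_{jk}$, computed by the same saddle analysis at $z=-1$, satisfy $\mathcal{K}_{jk}\sim\sqrt{\mathcal{K}_{jj}\mathcal{K}_{kk}}$ for the indices that dominate (those within an $O(\sqrt N)$ window of $N$, where the Gaussian correlation factor $\exp[-(c_j-c_k)^2/(2(c_j+c_k))]$ is asymptotically $1$), so that $\mathcal{K}$ is asymptotically of rank one and $\mu_{\max}(\mathcal{K})\sim\operatorname{tr}\mathcal{K}$.

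The main obstacle is this sharpness step: controlling the off-diagonals well enough to conclude $\mu_{\max}\sim\operatorname{tr}\mathcal{K}$. One must show that the near-saturation of Cauchy--Schwarz $\mathcal{K}_{jk}\le\sqrt{\mathcal{K}_{jj}\mathcal{K}_{kk}}$ holds on the dominant block, that contributions from far-apart indices are negligible, and that the error terms in the pointwise asymptotic (\ref{PN}) do not accumulate when integrated against $d\phi$ and summed over $j,k$. A secondary technical point is the uniformity of (\ref{PN}) as $\phi\to 0$, where $e^{i\phi}$ approaches the endpoint region of $[a,b]$: one must check that this neighbourhood, being far from the saddle $\phi=\pi$, contributes only exponentially small terms, so that the formal Laplace evaluation is legitimate.
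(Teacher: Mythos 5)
Your proposal follows essentially the same route as the paper: a Laplace/saddle evaluation of $\mathcal{K}_{\mu\nu}$ near $z=-1$ using (\ref{PN}), the observation that Cauchy--Schwarz is asymptotically saturated on an $O(\sqrt N)$ window of indices below $N$ so that the lower bound (\ref{e4}) is sharp, and an application of (\ref{laplace2}) to convert $\sum_k \mathcal{K}_{kk}$ into $\tfrac{1}{4}(4N+2\alpha)^{1/2}\mathcal{K}_{NN}$; this is exactly the paper's argument. The one concrete slip is a sign: because of the $(-1)^{N}$ prefactor in (\ref{PN}), the saddle analysis gives $\mathcal{K}_{jk}\sim(-1)^{j+k}\mathcal{K}_{jj}^{1/2}\mathcal{K}_{kk}^{1/2}$ (the paper's (\ref{0.7})), not $\mathcal{K}_{jk}\sim\mathcal{K}_{jj}^{1/2}\mathcal{K}_{kk}^{1/2}$. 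With your unsigned test vector $\xi_k=\mathcal{K}_{kk}^{1/2}/(\operatorname{tr}\mathcal{K})^{1/2}$ the quadratic form collapses to $(\sum_j(-1)^j\mathcal{K}_{jj})^2/\operatorname{tr}\mathcal{K}$, which is smaller than $\operatorname{tr}\mathcal{K}$ by a factor of order $N$ since consecutive diagonal entries have ratio tending to $1$; the sharpness step would then fail. The fix is trivial and is what the paper does: take $\xi_\mu=(-1)^{\mu}\sigma\mathcal{K}_{\mu\mu}^{1/2}$ on the window $N_0\le\mu\le N$ (equivalently, the asymptotic rank-one structure is $\epsilon\epsilon^{T}$ with an alternating sign vector $\epsilon$, which does not change $\mu_{\max}\sim\operatorname{tr}\mathcal{K}$). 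With that correction your outline reproduces the paper's proof, including the two technical caveats you rightly identify (negligibility of the arc away from $z=-1$, and uniformity of the off-diagonal saturation over the dominant block), which the paper itself treats only heuristically.
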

\begin{proof}
{\color{black}Indeed, with $P_N(z)$ having the form (\ref{PN}), we observe that for large enough $\mu,\nu$ the essential contribution to $\mathcal{K}_{\mu\nu}$ comes from the arc of the unit circle around $z=-1$.} Let $\omega>0$ be a fixed number and restrict the values of $\mu$ and $\nu$ to satisfy
\begin{equation}\label{uv}
\begin{split}
N-\omega N^{\frac{1}{2}}\leq\mu,\nu\leq N,~~N\rightarrow\infty,
\end{split}
\end{equation}
thus we have
{\color{black}\begin{equation*}
\begin{split}
\mathcal{K}_{\mu\nu}&=\int_{-\pi}^{\pi}\mathcal{P}_{\mu}\left({\rm e}^{{\rm i}\phi}\right)\mathcal{P}_{\nu}\left({\rm e}^{-{\rm i}\phi}\right)d\phi\\
&\sim\int_{\pi-\varepsilon}^{\pi+\varepsilon}\mathcal{P}_{\mu}\left({\rm e}^{{\rm i}\phi}\right)\mathcal{P}_{\nu}\left({\rm e}^{-{\rm i}\phi}\right)d\phi\\
&\sim\int_{-\varepsilon}^{\varepsilon}\mathcal{P}_{\mu}\left(-{\rm e}^{{\rm i}\theta}\right)\mathcal{P}_{\nu}\left(-{\rm e}^{-{\rm i}\theta}\right)d\theta\\
\end{split}
\end{equation*}}
Expanding the integrand for $|\theta|\ll1$ with $z=-{\rm e}^{{\rm i}\theta}$, following from (\ref{PN}), we obtain
{\small\begin{equation*}
\begin{split}
\mathcal{K}_{\mu\nu}&\sim\frac{(-1)^{\mu+\nu}}{2\pi}(b_{\mu}-a_{\mu})^{-\frac{1}{4}}(b_{\nu}-a_{\nu})^{-\frac{1}{4}}\int_{-\varepsilon}^{\varepsilon}\exp\bigg\{\left(1-\frac{\theta^{2}}{8}\right)
\Big[(4\mu+2\alpha)^{\frac{1}{2}}-\frac{t}{2z\sqrt{a_{N}}}{\color{black}+\frac{z}{2}}\\
&+\frac{t}{2z}+(4\nu+2\alpha)^{\frac{1}{2}}-\frac{t}{2\overline{z}\sqrt{a_{N}}}{\color{black}+\overline{\frac{z}{2}}}+\frac{t}{2\overline{z}}\Big]+\frac{{\rm i}\theta}{2}\left[(4\mu+2\alpha)^{\frac{1}{2}}-(4\nu+2\alpha)^{\frac{1}{2}}\right]\bigg\}d\theta~({\rm a})\\
&\sim\frac{(-1)^{\mu+\nu}}{2\pi}(4N+2\alpha)^{-\frac{1}{2}}{\rm e}^{-1-t}\exp\left[(4\mu+2\alpha)^{\frac{1}{2}}+(4\nu+2\alpha)^{\frac{1}{2}}+\frac{t}{\sqrt{a_{N}}}\right]\\
&~~~~\cdot\int_{-\infty}^{\infty}\exp\bigg\{-\frac{\theta^{2}}{8}\left[2(4N+2\alpha)^{\frac{1}{2}}+\frac{t}{\sqrt{a_{N}}}-{\color{black}t}\right]\bigg\}d\theta~~~~~~~~~
~~~~~~~~~~~~~~~~~~~~~({\rm b})\\
&\sim (-1)^{\mu+\nu}\pi^{-\frac{1}{2}}(4N+2\alpha)^{-\frac{1}{2}}\left[(4N+2\alpha)^{\frac{1}{2}}+\frac{t}{2\sqrt{a_{N}}}-{\color{black}\frac{t}{2}}\right]^{-\frac{1}{2}}{\rm e}^{-1-t+\frac{t}{\sqrt{a_{N}}}}\\
&~~~~~\cdot\exp\left[(4\mu+2\alpha)^{\frac{1}{2}}+(4\nu+2\alpha)^{\frac{1}{2}}\right].~~~~~~~~~~~~~~~~~~~~~~~~~~~~~~~~~~~~~~~~~~~~~~~~~~~({\rm c})
\end{split}
\end{equation*}}
Note that the term $(4\mu+2\alpha)^{\frac{1}{2}}-(4\nu+2\alpha)^{\frac{1}{2}}$ where in (a) remains bounded because of restricting $\mu$ and $\nu$ as in (\ref{uv}), so we can get rid of the linear term in the (a) for $\theta\ll1$. As previously mentioned, contributions to the integral (a) from $(-\infty,\varepsilon)$ and $(\varepsilon,\infty)$ are small enough compared with those from $[-\varepsilon,\varepsilon]$ as $\mu\rightarrow\infty$ and $\nu\rightarrow\infty$. Therefore, we can extend the integration interval to $(-\infty,\infty)$ but without affecting the approximation of $\mathcal{K}_{\mu\nu}$, so (b) holds. Using the laplace method given by (\ref{laplace1}), we obtain (c). Observing (c), we can find that when $\mu$ and $\nu$ are confined by (\ref{uv}) and large enough,
\begin{equation}\label{0.7}
\mathcal{K}_{\mu\nu}\sim(-1)^{\mu+\nu}\mathcal{K}_{\mu\mu}^{\frac{1}{2}}\mathcal{K}_{\nu\nu}^{\frac{1}{2}}.
\end{equation}

Using the approach of [\cite{C3}] and [\cite{C9}] with the $\left\{\xi_{\mu}\right\}$ vectors, allows us to determine the asymptotic behavior of $\lambda_{N}$ for large $N$, as follows
\begin{equation*}
\xi_{\mu}=
\begin{cases}
(-1)^{\mu}\sigma\mathcal{K}_{\mu\mu}^{\frac{1}{2}}, ~{\rm if} ~N_{0}\leq\mu\leq N,\ \ \\\\
0,\ \
~~~~~~~~~~~~~{\rm if}~\mu<N_{0},
\end{cases}
\end{equation*}
where $N_{0}:=E\left[N-\omega N^{\frac{1}{2}}\right]$, whilst the positive number $\sigma$ is determined depending on the condition
\begin{equation}\label{0.8}
\sum_{\mu=0}^{N}|\xi_{\mu}|^{2}=\sigma^{2}\sum_{\mu=N_{0}}^{N}\mathcal{K}_{\mu\mu}=1.
\end{equation}
It follows from (\ref{0.7}) and (\ref{0.8}) that
\begin{equation}\label{40}
\begin{split}
\sum_{\mu,\nu=0}^{N}\mathcal{K}_{\mu\nu}\xi_{\mu}\overline{\xi}_{\nu}=\sum_{\mu,\nu=N_{0}}^{N}(-1)^{\mu+\nu}\sigma^{2}\mathcal{K}_{\mu\nu}\mathcal{K}_{\mu\mu}^{\frac{1}{2}}\mathcal{K}_{\nu\nu}^{\frac{1}{2}}
\sim\sigma^{2}\left(\sum_{\mu=N_{0}}^{N}\mathcal{K}_{\mu\mu}\right)^{2}
=\sum_{\mu=N_{0}}^{N}\mathcal{K}_{\mu\mu}.
\end{split}
\end{equation}
This means the minimum value in equation (\ref{e3}) can be approximated by (\ref{e4}), following (\ref{40}), because of the arbitrariness of $\omega$, i.e.
\begin{equation*}
\lambda_{N}\sim\frac{2\pi}{\sum_{\mu=0}^{N}\mathcal{K}_{\mu\mu}}.
\end{equation*}
It follows that
\begin{equation}\label{5000}
\lambda_{N}\sim\frac{2\pi}{\int_{0}^{N}\mathcal{K}_{\mu\mu}d\mu}.
\end{equation}
Hence we get
\begin{equation*}
\begin{split}
\frac{2\pi}{\lambda_{N}}&\sim\sum_{\mu=0}^{N}\mathcal{K}_{\mu\mu}\\
&\sim\pi^{-\frac{1}{2}}(4N+2\alpha)^{-\frac{1}{2}}\left[(4N+2\alpha)^{\frac{1}{2}}+\frac{t}{2\sqrt{a_{N}}}-{\color{black}\frac{t}{2}}\right]^{-\frac{1}{2}}{\rm e}^{-1-t+\frac{t}{\sqrt{a_{N}}}}\int_{0}^{N}\exp\left[2(4x+2\alpha)^{\frac{1}{2}}\right]dx\\
&\sim4^{-1}\pi^{-\frac{1}{2}}\left[(4N+2\alpha)^{\frac{1}{2}}+\frac{t}{2\sqrt{a_{N}}}-{\color{black}\frac{t}{2}}\right]^{-\frac{1}{2}}{\rm e}^{-1-t+\frac{t}{\sqrt{a_{N}}}}\exp\left[2(4N+2\alpha)^{\frac{1}{2}}\right].\\
\end{split}
\end{equation*}
Consequently,
\begin{equation*}
\begin{split}
\lambda_{N}\sim8\pi^{\frac{3}{2}}\left[(4N+2\alpha)^{\frac{1}{2}}+\frac{t}{2\sqrt{a_{N}}}-{\color{black}\frac{t}{2}}\right]^{\frac{1}{2}}{\rm e}^{1+t-\frac{t}{\sqrt{a_{N}}}}\exp\left[-2(4N+2\alpha)^{\frac{1}{2}}\right].\\
\end{split}
\end{equation*}
which complete the proof.
\end{proof}

\begin{rem}
For the classical Laguerre weight $w_{0}(x):=x^{\alpha}{\rm e}^{-x},~x\in[0,\infty),~\alpha>-1$, {\rm i.e.} taking $t=0$ for our weight $w(x)$, we have
\begin{equation*}
\lambda_{N}\sim2^{\frac{13}{4}}\pi^{\frac{3}{2}}{\rm e}\left(2N+\alpha\right)^{\frac{1}{4}}\exp\left[-2^{\frac{3}{2}}\left(2N+\alpha\right)^{\frac{1}{2}}\right].
\end{equation*}
which agrees with the results from [\cite{MMAS}].
\end{rem}

\begin{rem}
Noting that when $\alpha=0,~t=0$, Szeg\"{o}'s [\cite{C3}] classical result for the Laguerre weight $w_{1}(x):={\rm e}^{-x}$ is recovered:
\begin{equation*}
\begin{split}
&\lambda_{N}\sim2^{\frac{7}{2}}\pi^{\frac{3}{2}}{\rm e}N^{\frac{1}{4}}\exp\left[-4N^{\frac{1}{2}}\right].
\end{split}
\end{equation*}
\end{rem}

{\color{black}\section{Appendix: Integral identities}
\subsection{Appendix A}
The integrals identities listed below, which are relevant to our derivation and can be found in [\cite{new1,C21,new2}].

\begin{equation*}
~~~~~~~~~~~~~~~~~~~~~~~~~~~~~~~~~~~~~~~~~~\int_{a}^{b}\frac{dx}{\sqrt{(b-x)(x-a)}}=\pi.~~~~~~~~~~~~~~~~~~~~~~~~~~~~~~~~~~~~~~~~~~~~~~~({\color{black}\rm A1})
\end{equation*}
\begin{equation*}
~~~~~~~~~~~~~~~~~~~~~~~~~~~~~~~~~~~~~~~~~~~~~~~\int_{a}^{b}\frac{xdx}{\sqrt{(b-x)(x-a)}}=\pi\frac{a+b}{2}.~~~~~~~~~~~~~~~~~~~~~~~~~~~~~~~~~~~~~~~~~~({\color{black}\rm A2})
\end{equation*}
\begin{equation*}
~~~~~~~~~~~~~~~~~~~~~~~~~~~~~~~~~~~~~~~~~~~\int_{a}^{b}\frac{dx}{x^{2}\sqrt{(b-x)(x-a)}}=
\frac{(a+b)\pi}{2(ab)^{\frac{3}{2}}}.~~~~~~~~~~~~~~~~~~~~~~~~~~~~~~~~~~~~~({\color{black}\rm A3})
\end{equation*}
\begin{equation*}
~~~~~~~~~~~~~~~~~~~~~~~~~~~~~~~~~~~\int_{a}^{b}\frac{dx}{(x+t)\sqrt{(b-x)(x-a)}}=
\frac{\pi}{\sqrt{(t+a)(t+b)}}.~~~~~~~~~~~~~~~~~~~~~~({\color{black}\rm A4})
\end{equation*}
\begin{equation*}
~~~~~~~~~~~\int_{a}^{b}\frac{\log(x+t)dx}{x\sqrt{(b-x)(x-a)}}=
\frac{\pi}{\sqrt{ab}}\log\left[\frac{\left(\sqrt{ab}+\sqrt{(t+a)(t+b)}\right)^{2}-t^{2}}{\left(\sqrt{a}+\sqrt{b}\right)^{2}}\right].~~~~~({\color{black}\rm A5})
\end{equation*}
\subsection{Appendix B}
In this section, we will prove the following equation:
\begin{equation*}
~~~~~~~~~~~~~~~~\int_{a}^{b}\frac{dx}{x(x-t)\sqrt{(b-x)(x-a)}}
=-\frac{\pi}{t}\left[\frac{1}{\sqrt{ab}}+\frac{1}{\sqrt{(t-a)(t-b)}}\right].~~~~~~~~~~~~({\color{black}\rm B1})
\end{equation*}

Defining
\begin{equation*}
F(\zeta):=\frac{1}{\zeta(\zeta-t)\sqrt{(\zeta-a)(\zeta-b)}},~~\zeta\in \Lambda,~~t\notin[a,b],
\end{equation*}
where the definition of $\Lambda$ can be found in the below figure:
\begin{figure}[H]
\centering
\includegraphics[width=0.4\textwidth]{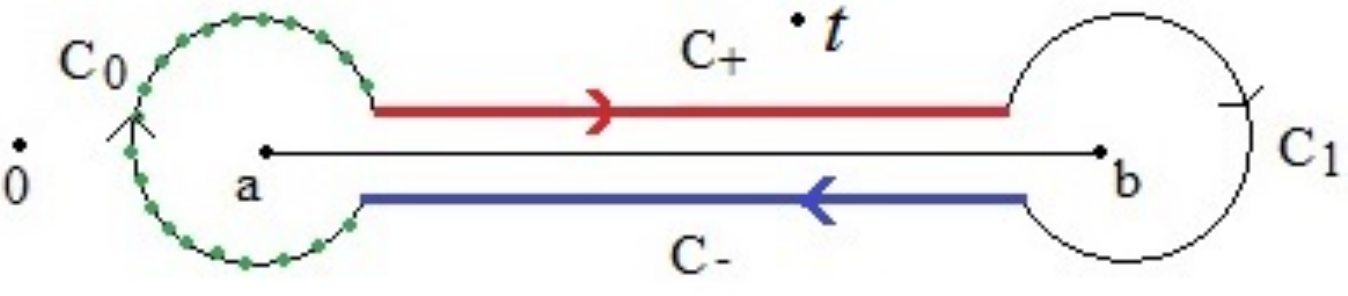}~~~~\text{where $\Lambda:=C_{0}+C_{+}+C_{1}+C_{-}$,}
\caption{The integration path $\Lambda$ of $F(\zeta)$. }
\end{figure}
then
\begin{equation*}
\int_{\Lambda}F(\zeta)d\zeta=\int_{\Lambda}\frac{1}{\zeta(\zeta-t)\sqrt{(\zeta-a)(\zeta-b)}}d\zeta=
\left(\int_{C_{0}}+\int_{C_{+}}+\int_{C_{1}}+\int_{C_{-}}\right)F(\zeta)d\zeta
\end{equation*}
where $\zeta$ is in the interior of $\Lambda$. When $\zeta\rightarrow x\in(a,b)$, $C_{0}$ tends to a closed circle centered at $a$ with radius $x-a$, similarly, $C_{1}$ tends to a closed circle centered at $b$ with radius $b-x$. Thus by Cauchy-Gourat theorem,
\begin{equation*}
\int_{C_{0}}F(\zeta)d\zeta\rightarrow0,~~~\int_{C_{1}}F(\zeta)d\zeta\rightarrow0.
\end{equation*}

For the case of $C_{+}$, which is shown as below,
\begin{figure}[H]
\centering
\includegraphics[width=0.4\textwidth]{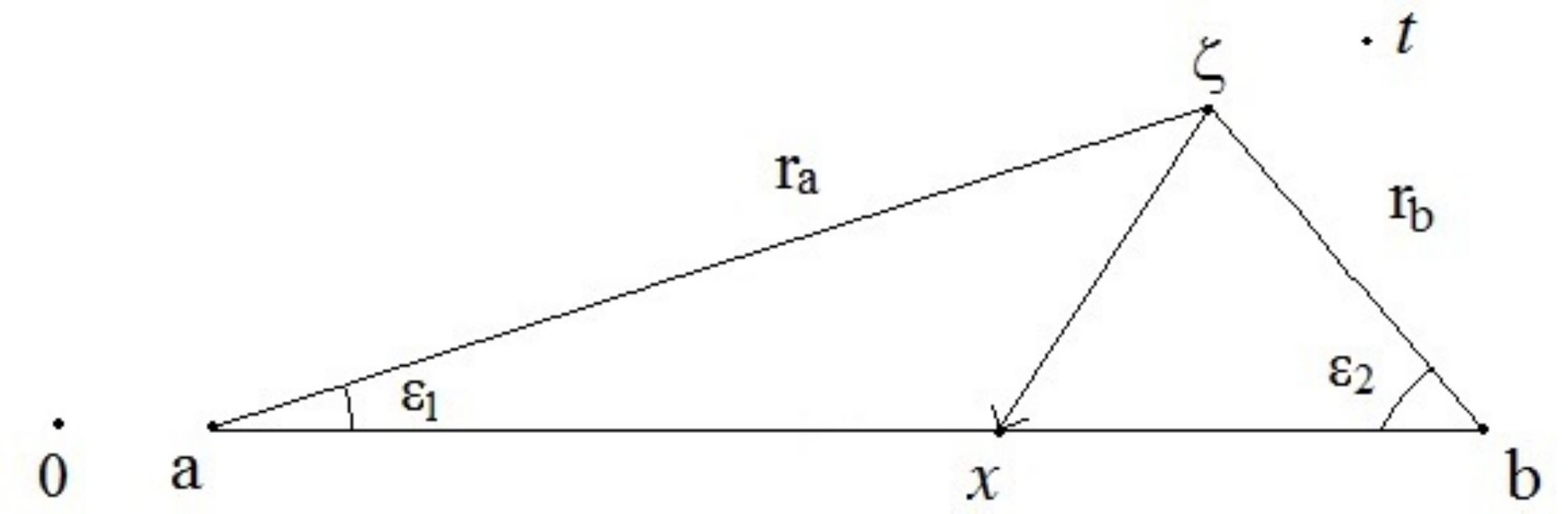}
\caption{The integration path for $\zeta$ over $(a,b)$. }
\end{figure}
where $\zeta-a=r_{a}{\rm e}^{{\rm i}\varepsilon_{1}}$, $\zeta-b=r_{b}{\rm e}^{{\rm i}(\pi-\varepsilon_{2})}$, then as $\zeta\rightarrow x$ and $\varepsilon_{1},\varepsilon_{2}\rightarrow 0$, we have
\begin{equation*}
\int_{C_{+}}F(\zeta)d\zeta=\int_{a}^{b}\frac{{\color{black}1}}{\zeta(\zeta-t)\sqrt{r_{a}r_{b}{\rm e}^{{\rm i}(\pi-\varepsilon_{2}+\varepsilon_{1})}}}d\zeta
\rightarrow\int_{a}^{b}\frac{dx}{{\rm i}x(x-t)\sqrt{(b-x)(x-a)}}.
\end{equation*}
In the same way, for the case of $C_{-}$,
\begin{figure}[H]
\centering
\includegraphics[width=0.4\textwidth]{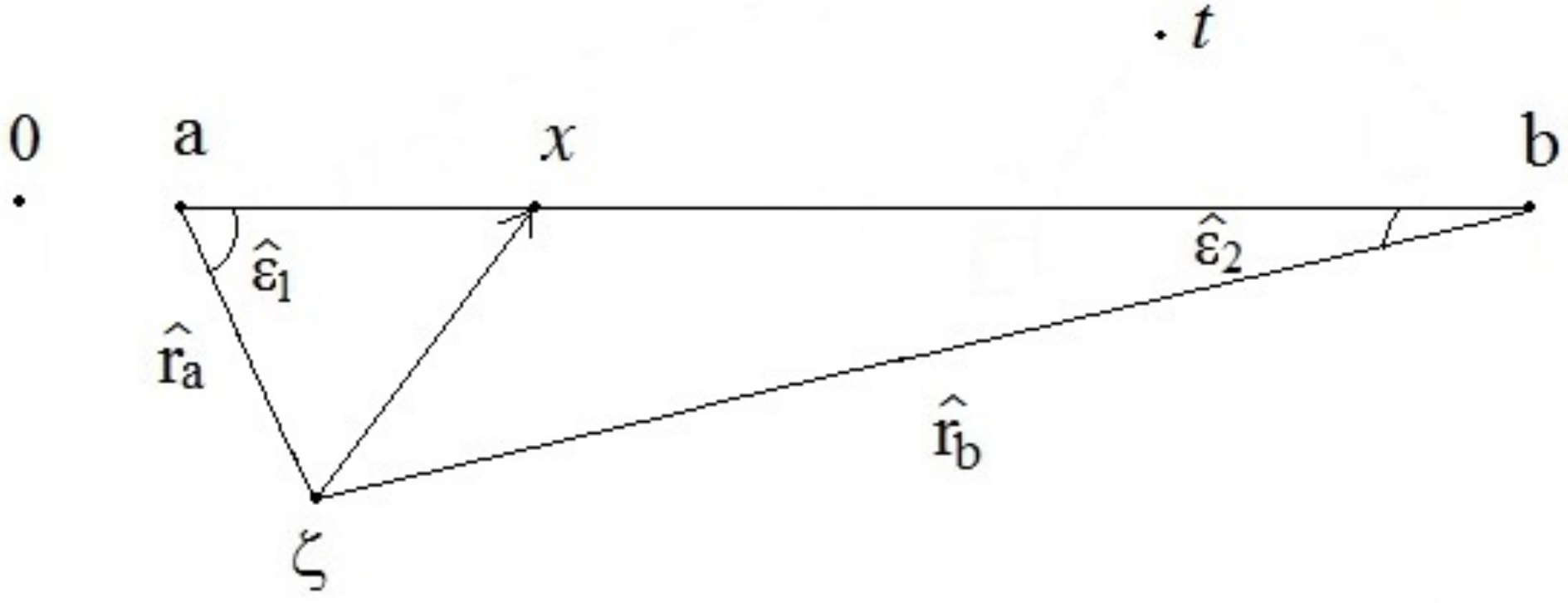}
\caption{The integration path for $\zeta$ below $(a,b)$. }
\end{figure}
where $\zeta-a=\widehat{r_{a}}{\rm e}^{{\rm i}(2\pi-\widehat{\varepsilon}_{1})}$, $\zeta-b=\widehat{r_{b}}{\rm e}^{{\rm i}(\pi+\widehat{\varepsilon}_{2})}$, then as $\zeta\rightarrow x$ and $\widehat{\varepsilon}_{1},\widehat{\varepsilon}_{2}\rightarrow 0$, we obtain
\begin{equation*}
\int_{C_{-}}F(\zeta)d\zeta=\int_{b}^{a}\frac{d\zeta}{\zeta(\zeta-t)\sqrt{\widehat{r_{a}}\widehat{r_{b}}{\rm e}^{{\rm i}(3\pi+\widehat{\varepsilon}_{2}-\widehat{\varepsilon}_{1})}}}d\zeta
\rightarrow\int_{a}^{b}\frac{dx}{{\rm i}x(x-t)\sqrt{(b-x)(x-a)}}.
\end{equation*}
Hence,
\begin{equation*}
\int_{\Lambda}F(\zeta)d\zeta\rightarrow\frac{2}{\rm i}\int_{a}^{b}\frac{dx}{x(x-t)\sqrt{(b-x)(x-a)}}.
\end{equation*}
We note that $F(\zeta)$ has {\color{black}two poles $\zeta=0$, $\zeta=t$ and and a removable singularity $\zeta=\infty$}, then
\begin{equation*}
\begin{split}
\int_{\Lambda}F(\zeta)d\zeta&=2\pi{\rm i}\mathop{\rm Res}\limits_{\zeta=0} F(\zeta)+2\pi{\rm i}\mathop{\rm Res}\limits_{\zeta=t} F(\zeta)+2\pi{\rm i}\mathop{\rm Res}\limits_{\zeta=\infty}F(\zeta)\\
&=\frac{2\pi{\rm i}}{t\sqrt{ab}}+\frac{2\pi{\rm i}}{t\sqrt{(t-a)(t-b)}}-2\pi{\rm i}\mathop{\rm Res}\limits_{\eta=0}\left[F\left(\frac{1}{{\color{black}\eta}}\right)\frac{1}{{\color{black}\eta}^{2}}\right]\\
&=2\pi{\rm i}\left(\frac{1}{t\sqrt{ab}}+\frac{1}{t\sqrt{(t-a)(t-b)}}\right).
\end{split}
\end{equation*}
Consequently,
\begin{equation*}
\int_{a}^{b}\frac{dx}{x(x-t)\sqrt{(b-x)(x-a)}}=\frac{\rm i}{2}\int_{\Lambda}F(\zeta)d\zeta=-\frac{\pi}{t}\left(\frac{1}{\sqrt{ab}}+\frac{1}{\sqrt{(t-a)(t-b)}}\right).
\end{equation*}}

\section{Acknowledgements}
 M. Zhu and Y. Chen would like to thank the Science and Technology Development Fund of the Macau SAR for generous support in providing FDCT 023/2017/A1. They would also like to thank the University of Macau for generous support via MYRG 2018-00125 FST. {\color{black}C. Li acknowledges the support  of the National Natural Science Foundation of China under Grant no. 11571192 and K. C. Wong Magna Fund in Ningbo University.}

\end{document}